\newtheorem{proposition}{Proposition}
\newenvironment{proof}[1][Proof]{\noindent\textbf{#1.} }{\ \rule{0.5em}{0.5em}}
\begin{document}
\begin{titlepage}
\title{Limited Cognitive Abilities and Dominance Hierarchies\thanks{We sincerely thank the editor F.J.A. Jacobs and two anonymous reviewers for their suggestions that help to greatly improve the paper. We also thank Jonathan Newton for his comments.}}
\author{Hanyuan Huang\thanks{Department of Economics, University of Oregon, hanyuan@uoregon.edu} \and Jiabin Wu\thanks{Department of Economics, University of Oregon, jwu5@uoregon.edu}}
\date{\today}
\maketitle
\begin{abstract}
\noindent We propose a novel model to explain the mechanisms underlying dominance hierarchical structures. Guided by a predetermined social convention, agents with limited cognitive abilities optimize their strategies in a Hawk-Dove game. We find that several commonly observed hierarchical structures in nature such as linear hierarchy and despotism, emerge as the total fitness-maximizing social structures given different levels of cognitive abilities. \\
\vspace{0in}\\
\noindent\textbf{Keywords:} Dominance hierarchy, Hawk-Dove game, Social convention, Cognitive ability.\\
\vspace{0in}\\
\\
\noindent\textbf{Declarations:} None.\\
\bigskip
\end{abstract}
\setcounter{page}{0}
\thispagestyle{empty}
\end{titlepage}
\pagebreak \newpage

\doublespacing

\section{Introduction} \label{sec:Introductio}

Dominance hierarchy is a social hierarchical structure in which a ranking system among the agents in a population can be induced based on their interactions. Introduced by \cite{Schjelderup-Ebbe1935} in describing the social organization of chickens, dominance hierarchies have been found to be very common as a function of regulating animal societies, especially in situations where there are potential costs and risks of conflict during interactions. More specifically, in a society with dominance hierarchy, pairwise interactions are regulated by a ranking system, where the higher-ranked agent in a pair acts dominantly, and the lower-ranked agent acts submissively. Linear hierarchy, known as pecking order, is a common social structure in various species including sheep, birds and crayfish (\cite{Addison1980}; \cite{Barkan1986}; \cite{Goessmann2000}; \cite{Hausfater1982}; \cite{Heinze1990}; \cite{Nelissen1985}; \cite{Savin-Williams1980}; \cite{Vannini1971}; \cite{Wang2011}). In this hierarchy, every agent is dominated by the higher-ranked members and in turn dominates the lower-ranked agents. 
Other nonlinear hierarchical structures have also been observed in nature. A typical nonlinear hierarchy is despotism, which can be found in hamsters, gorillas, and African wild dogs (\cite{Alcock2013}). It is a social order in which one agent dominates all others, with no dominance relations among subordinates. In addition, more complex nonlinear hierarchical social structures have been observed in dolphins, chimpanzees, baboons, and macaques (\cite{Holekamp1993}; \cite{Kummer1984}; \cite{Surbeck2011}). It is worth noting that if a hierarchical structure is not linear, then it must have at least one of the following properties (\cite{DeVries1995}): (1) there are two agents with equal hierarchical status, that is, they behave in an equal manner upon interaction; (2) there are two agents with an unknown or undefined relationship; and (3) there is a non-transitive relationship in a triad (A dominates B, B dominates C, and C dominates A). The commonality of the linear hierarchical structure in nature suggests that transitivity may be a desired property of dominance hierarchies (\cite{Appleby1983}).

Many studies have been conducted to explain how social hierarchies can be achieved and sustained. Two major approaches have been proposed in the literature. The first approach proposes that the social hierarchical structure is an external attribute and an expression of intrinsic physical or physiological differences among the agents, which can be directly (e.g., greater fighting abilities and larger bodies) or indirectly (e.g., better reproductive abilities and higher social status) related to the dominance behavior. This is also known as the prior attribute hypothesis (\cite{Drews1993}; \cite{Chase2002}). However, evidence has shown that it is difficult to predict the outcomes of dominance encounters for animals in small groups using these differences (\cite{Chase2011}). The second approach suggests that social hierarchies result from the dynamics of social interactions. The most representative case is the winner-and-loser effect, where individuals who win (lose) a contest have higher (lower) probability winning the next contest (\cite{Dugatkin2004}; \cite{Kura2016}; \cite{Goessmann2000}). Although it is the most representative explanation for hierarchy formation, the theory of winner-and-loser effect has been criticized for its arbitrary set-up values and lack of independence with regard to personality traits (\cite{Favati2017a}).

The two existing approaches are not mutually exclusive (\cite{Chase2011}), yet they have rarely been studied jointly (\cite{Favati2017a}). The aim of this research is to develop a model to explain some typical social hierarchical structures considering the pre-existed differences in the individual characteristics of the agents and the social interactions among them. Our model is built on the classical Hawk-Dove (HD) game. A population of agents is randomly matched to play the HD game, and each agent carries a unique identity. Thus, agents are able to condition their strategies on their opponents' identities. We then assume that the identities of the agents can be ranked linearly to capture the external differences in attributes, such as their biological characteristics. We also introduce a social convention to the game that provides ``suggestions" to the agents on their actions in the HD game according to their relative ranks. The purpose of this social convention can be seen as nature acting like a principal, who aims to maximize the fitness of a population of agents by indirectly influencing their behavior (\cite{Binmore1994}). 

Another important feature of the model is that we assume that the agents can only memorize some agents' identities, but not the rest. 
The restriction imposed on the agents' memories can be viewed as a case of bounded rationality framed as a costly computation (\cite{Halpern2015}). It captures the fact that it usually requires a certain level of cognitive ability for the agents to understand the characteristics of their opponents, to realize the useful information revealed by those characteristics, and to act accordingly. This limited ability to acquire others' identities, also referred as memory size, helps the agents in choosing their strategies, which include whose identities they choose to memorize and their corresponding actions in the HD game. The agents can condition their actions in the HD game on their opponents' identities only when the identities of these opponents are memorized.

We analyze the hierarchical social structures that emerge as equilibria in the model and consider those that follow the suggestion of the social convention and maximize the total fitness of the population. We find that different hierarchical social structures, including linear hierarchy and despotism, maximize total fitness in populations with different levels of cognitive abilities. Specifically, when the memory size of the agents is sufficiently large, the linear hierarchy is optimal. When the memory size is singular, despotism is optimal for small populations. We also conjecture that when cognitive ability is at a medium level, a three-layer dominance hierarchy structure, that divides the population into three classes, may be optimal. We confirm this conjecture through simulation. Hence, our model provides a mechanism that links cognitive ability with social hierarchy. 


A closely related paper by \cite{DoiNakamaru2018} studies the coevolution of transitive inference and memory capacity in the Hawk–Dove game. They find that when the cost of fighting is low, transitive inference with limited memory capacity has an evolutionary advantage because the agents can avoid costly fights via prompt formation of the dominance hierarchy which does not necessarily reflect the actual rank of the agents' resource-holding potential\footnote{See also \cite{NakamaruSasaki2003} for a study on the evolution of transitive inference.} While both we and \cite{DoiNakamaru2018} consider the dominance hierarchy and limited memory capacity, their approach is different from ours in several ways. First, in the model of \cite{DoiNakamaru2018}, agents engage in repeated interactions with one another, and their memories allow them to count a certain number of past wins and losses of the contests between two agents, which helps them determine the ranking of the two agents in terms of resource-holding potential. Instead, we consider that agents memorize the identities of some other agents. Second, \cite{DoiNakamaru2018} consider agents with different resource-holding potential, whereas the agents in our model are identical except for their identities. Hence, the agents in our model do not need to access who are stronger (weaker) than them.  Third, \cite{DoiNakamaru2018} examine the evolutionary stability of different combinations of inference procedures and memory capacity. In contrast, we investigate what equilibrium social structure can arise given different memory capacities and find those that maximize population fitness.

The remainder of this paper is organized as follows. In Section 2, the proposed model is presented. Section 3 analyzes the equilibria of the game from our model under various levels of limited cognitive abilities. Section 4 discusses a possible relaxation of restrictions imposed on the model. Section 5 consists of the conclusion.

\section{The Model} \label{sec:Model}

\subsection{The Hawk-Dove game}
Consider a population of $N$ agents who are randomly matched in pairs to play the Hawk-Dove (HD) game. The HD game has been widely used to model pairwise interactions where individuals contest a beneficial resource with a possibility of an escalated fight at a large cost, which constructs a simple situation where players have a choice to either being harsh (play Hawk) or soft (play Dove) on their opponents. The earliest illustration of the HD game is presented by \cite{Smith1973} in their analysis of animal behavioral strategies in contest situations. In this paper, we adopt the game form provided by \cite{Smith1976} as shown in Figure 1. 

 In this game, $V$ is the value of the contested resource and $C$ is the cost of an escalated fight. It is assumed that the value of the resource is less than the cost of a fight, that is, $C > V > 0$. Players split the beneficial resource equally if they both play Dove. They equally split the difference between the resource and fighting cost if they both play Hawk. The player who plays Hawk exclusively wins the resource if the other player plays Dove.
 

\begin{figure}[!htbp] 
\centering
	\caption{The Hawk-Dove Game \label{prisoners}}  
\begin{game}{2}{2}[Player 1][Player 2]
	    &  Hawk      &  Dove     \\
	 Hawk  &  $(V-C)/2, (V-C)/2$ & $V, 0$  \\
	 Dove  &  $0, V$ & $ V/2, V/2$\\
\end{game}
\end{figure}

The HD game carries three Nash equilibria (NEs), with two in pure strategy, $(Hawk, Dove)$ and $(Dove, Hawk)$, and one in mixed strategy, where each player plays \textit{Hawk} with a probability of $V/C$. The expected fitness from playing the mixed NE strategy is $(1-V/C)(V/2)$. 


\subsection{Identity and Social convention}

We assume that each agent is assigned with an identity. Identity plays the role of a name tag and it is unique for each agent. With this information, it is possible for the agents to condition their strategies on their opponents' identities. The difference in identities between two agents can be interpreted on the basis of differences in some of their biological characteristics, such as body size and reproductive ability, whereas in a social interpretation, it can be a label (e.g., representing different social classes in human societies) attached to the agents. 


A linear social rank over the agents' identities is assumed. With this assumption, identities can be written as numbers such that their values reflect the relative ranks. Without loss of generality, we say agents with smaller-valued identities are ranked higher.

There is also a social convention that gives favor to those who rank higher. This is done by imposing a suggestive rule on the agents, regulating their actions in the HD game in accordance with the predetermined social rank. Specifically, it suggests that agents play $Hawk$ against opponents with lower ranks, and play $Dove$ against opponents with higher ranks. Since everyone is assumed to have a unique identity, if the social convention is followed, in any equilibrium, one player plays $Hawk$ and the other plays $Dove$ in each pairwise interaction.

Importantly, a linear social rank is assumed for the purpose of designing the social convention. The numerical values of the numbers are meaningless to the agents, and they do not need to know the social rank. For example, if an agent’s assigned number is 5 and it knows the identity of the agent whose assigned number is 1, then given the suggestion of the social convention, the number 5 agent should play Dove to the number 1 agent. 

A pertinent question is, who design the social convention? We assume nature acts as the principal, trying to maximize the fitness of the population by indirectly influencing agents’ behavior \citep{Binmore1994}. The social convention based on the linear social rank can guide agents to avoid playing $(Hawk, Hawk)$, which is the only strategy profile that generates fitness loss ($C$) on the scale of the full population in the HD game (Figure 1).

\subsection{Memory}

In a complete information environment, each agent's identity is common knowledge to all agents; therefore the agents can condition their strategies on their opponents' identities for every possible opponent. However, given their limited cognitive abilities, it seems unrealistic to assume that all agents can access all others' identities any time at no cost, especially in a large population. We use the concept of memory to model the limitations of the cognitive ability of the agents. The limitation is the maximum number of agents that an agent can have memory of, which we refer to as the individual memory size ($m$). In other words, the agents can memorize the identities of at most $m$ other agents, but not those of the rest. Once encountered, agents can recognize the identities of those who they remember. Therefore, on the one hand, the agents can prepare a corresponding strategy for each of the agents that they have a memory of, conditioned on the basis of their identities. On the other hand, they can have only one universal strategy of responding to the rest of the agents, whom they do not have a memory of, because they have no way to distinguish these opponents.

With limited memory, the predetermined social rank and the social convention will be effective for the agents only in situations when they are facing opponents with identities that they memorize. This is because social convention provides agents with suggestions on their actions according to their relative ranks to their opponents. If an agent does not know the identity of its opponent, it cannot act accordingly as suggested by the social convention.\footnote{It is still possible for the agents to infer the relative ranks between themselves and their opponents without having the identities of their opponents in memory. For example, the top ranked agent can infer that all its opponents have lower ranks; the bottom ranked agent can infer that all its opponents have higher ranks; the second top ranked agent who has the top ranked agent in its memory can infer that all other opponents have lower ranks. Nevertheless, inferring about the ranks of those agents that an agent has no memory of arguably requires strong cognitive ability, which cannot simply be modeled as memory. Hence, we do not consider such a possibility in this paper.} 


To summarize, in our model, a population of agents is randomly matched in pairs to play the HD game, as shown in Figure 1. Each agent in the population carries a unique number as its identity. The numbers provide a natural linear social rank among the agents. There is a social convention suggesting that agents play $Hawk$ against opponents with lower ranks, and play $Dove$ against opponents with higher ranks. Each agent can memorize the identities of a limited number of agents. We assume that all the agents in the population have the same memory capacity. Note that an agent is not required to use all of its memory. In addition, we do not explicitly model the cost of memorizing an agent's identity. Nevertheless, if we encounter two equilibria that induce the same level of total fitness of the population and one equilibrium requires fewer memory slots than the other, we consider that the former is favored by natural selection. 

\subsection{Equilibrium}

\subsubsection{Memory table, strategy table and social structure graph}

To discuss the potential equilibria of the model, we first need to clarify how we describe the agents' strategies. At the beginning of the game, the agents must decide: (1) who they memorize, and (2) what their corresponding strategies are against each possible opponent. They can have a separate strategy for each opponent that they memorize (because once encountered, they are able to tell which opponent they are playing against if they have the opponent in their memory), but only one universal strategy for all other opponents that they do not have a memory of. As a result, the strategy profiles of the game should contain every agents' choices on memories and the corresponding strategies.

We use two tables, a memory table and a strategy table to represent a particular strategy profile. Figure 2 shows an example of the memory and strategy tables of a particular strategy profile in a population with five agents. In the example, Player 1 ($P_1$) memorizes $P_2$ and $P_3$ and, always plays $Hawk$($H$). $P_2$ and $P_3$ memorize $P_1$ and $P_5$, and play $Dove$($D$) to $P_1$ and the mixed strategy in the mixed NE of the original HD game (\textit{plays $Hawk$ with a probability of V/C}), which we refer to as $M$, to others. $P_4$ memorizes $P_1$ and $P_5$, and plays $D$ to $P_1$, $H$ to $P_5$, and $M$ to others. $P_5$ memorizes $P_1$ and $P_4$, and plays $D$ to $P_1$ and $P_4$, and $M$ to others.

The memory and strategy tables combined can be used to describe any strategy profile in the game. The two tables need to be consistent in describing the same strategy profile of the game, meaning that agents have to be able to perform the strategy they choose in the strategy table, given their memories as shown in the memory table. Specifically, every agent should have the same strategy in the strategy table for all opponents that they do not have a memory of. The strategy table lists every agent's strategies in the HD game, which can be $H$, $D$ or any mix of $H$ and $D$. However, if a strategy table describes a strategy profile that is an equilibrium, then its only possible entries are $H$, $D$ and $M$, and any pair that is on the symmetric positions relative to the diagonal (e.g., (1,2) and (2,1), or (3,5) and (5,3)) must be $(H,D)$, $(D,H)$ or $(M,M)$, reflecting only three NEs in the original HD game.

In addition, if the agents' strategies do not violate the social convention (that is, the upper-right entries in the strategy table must be $H$ if their corresponding entries in the memory table are $1$, and the lower left entries in the strategy table must be $D$ if their corresponding entries in the memory table are $1$), then the strategy profile is an equilibrium that follows social convention. 

\begin{figure}[htbp]
	\centering
	\caption{Example of the memory and strategy tables of a strategy profile}\label{twomatrices}
\begin{minipage}{0.5\textwidth}
\begin{game}{5}{5}[][A: Memory Table]
   	    &  $P_1$                &  $P_2$ & $P_3$ & $P_4$ & $P_5$\\
   	 $P_1$  &    *     & 1 & 1 & 0 & 0\\
   	 $P_2$  &    1     & * & 0 & 0 & 1\\
   	$P_3$  &    1      & 0 & * & 0 & 1\\
   	$P_4$  &    1      & 0 & 0 & * & 1\\
   	    $P_5$  &    1  & 0 & 0 & 1 & *\\
\end{game}
\end{minipage}
\begin{minipage}{0.3\textwidth}

\begin{game}{5}{5}[][B: Strategy Table]
   	    &  $P_1$                &  $P_2$ & $P_3$ & $P_4$ & $P_5$\\
   	 $P_1$  &    *     & H & H & H & H\\
   	 $P_2$  &    D    & * & M & M & M\\
   	$P_3$  &   D      & M & * & M & M\\
   	$P_4$  &    D     & M & M & * & H\\
   	    $P_5$  &   D  & M & M & D & *\\
\end{game}
\end{minipage}
\end{figure}

Since in any equilibrium, the only strategy pairs between any two agents are $(H,D)$, $(D,H)$ or $(M,M)$, there are only two possible relations between any two agents as the result of an equilibrium: one (who plays $H$) dominates the other (who plays $D$) or an equal status (play $M$ against each other). Therefore, a network graph can be used to show the underlying social structure based on the equilibrium of the game. Figure 3 shows the social structure graph based on the example in Figure 2. In the graph, players are represented by circles with names, and a solid arrow between two circles indicates the dominant-submissive relation, where its direction shows the direction of domination. Two players are connected by a dotted line if they have an equal status in which they play $M$ when facing each other. Note that the strategy profile shown in Figure 2 is an equilibrium. However, it does not follow social convention because $P_2$ and $P_3$ both have memory of $P_5$ but do not play \textit{Hawk} against it. The social convention requires that an agent will always play \textit{Hawk} against any agent that they recognize who is lower in the social order.

\begin{figure}[htbp]
	\centering
	\caption{The social structure graph of the society in Figure 2}
\begin{tikzpicture}[
            > = stealth, 
            shorten > = 1pt, 
            auto,
            node distance = 3cm, 
            semithick 
        ]

        \tikzstyle{every state}=[
            draw = black,
            thick,
            fill = white,
            minimum size = 4mm
        ]

        \node[state] (P1) {$P_1$};
        \node[state] (P2) [below left of=P1] {$P_2$};
       \node[state] (P3) [below of=P2] {$P_3$};

       \node[state] (P4) [below right of=P1] {$P_4$};
       \node[state] (P5) [below of=P4] {$P_5$};

        \path[->] (P1) edge node {} (P2);
        \path[->] (P1) edge node {} (P3);
        \path[->] (P1) edge node {} (P4);
        \path[->] (P1) edge node {} (P5); 
        \path[-,dotted] (P2) edge node {} (P3);
        \path[-,dotted] (P2) edge node {} (P4);
        \path[-,dotted] (P2) edge node {} (P5);
        \path[-,dotted] (P3) edge node {} (P4);
        \path[-,dotted] (P3) edge node {} (P5);
        \path[->] (P4) edge node {} (P5);
    \end{tikzpicture}
    \end{figure}
    
\subsection{Equilibrium selection based on total fitness}

Each equilibrium constructs a hierarchical social structure in the sense that it assigns each agent a hierarchical position resulting in some agents enjoying a higher fitness than the others. The assignment is purely based on the agents' identities, which have nothing to do with superior rationality, information, or contribution.

However, the benefit of forming a hierarchical social structure may be more significant at the population level than at the individual level. Indeed, at the individual level, some agents  (those with dominant roles in the hierarchies) may benefit and the rest (those with submissive roles) may suffer from a hierarchical social structure compared to an anarchical state where the only equilibrium is everyone always playing $M$. However, at the population level, a hierarchical social structure may increase the total fitness of the entire population (the sum of individual fitnesses), which helps the population stand out in the competition with other populations, if there are multiple populations. Hence, a population with a hierarchical social structure may be favored by natural selection. 

We seek a social structure that maximizes the total fitness of the population. Recall that $(Hawk, Hawk)$ is the only strategy profile that generates fitness loss, and under our potential equilibrium social structures, this only happens when two players have equal status and they play $(M,M)$ (then $(Hawk,Hawk)$ occurs at a probability of $V^2/C^2$). Therefore, an optimal social structure must have the fewest pairs of agents with the same social status. 

A secondary evaluation of the total fitness of the population is performed to examine the total memory usage among all agents. Memorizing an agent's identity is potentially costly, although the cost may be minimal compared with the fitness loss from the $Hawk-Hawk$ clash in the game. Hence, we only use such an evaluation as a tie-breaker for social structures that provide the same level of total fitness. 

\section{Analysis}

In this section, we study the model with a focus on the hierarchical social structure that maximizes total fitness under various levels of limited cognitive abilities. As described above, the limitation on cognitive abilities is modelled as a memory constraint ($m$), that is, the maximum number of other agents that an agent can remember the identities of.  


\subsection{Fully restricted memory: $m<1$}

First, we consider the case in which memory size is smaller than one. This occurs when agents' cognitive ability is sufficiently low. Since the memory size is not sufficient for the agents to memorize the identity of any single agent, the only possible memory table for this population is as shown in Figure 4(A) (we use a population of five agents for illustration). Then, the only equilibrium strategy table that can be supported in this case is where everyone plays the mixed strategy in the mixed NE in the original HD game ($M$) to all others, as shown in Figure 4 (B). Thus, the corresponding social structure gives an anarchical (or some refer to it as egalitarian) social structure, as shown in Figure 5. 

When evaluating the total fitness, because every pair in the population plays $(M,M)$, the probability of having a $Hawk-Hawk$ clash, the major source of fitness loss in the game, is $V^2/C^2$. Therefore, the expected fitness loss for each pair is $(V^2/C^2)(C) = V^2/C$ and all agents have an identical average fitness value of $(1-V/C)(V/2)$. Note that in this case, the social convention is not followed as the agents cannot access others' identities. 

\begin{figure}[htbp]
	\centering
	\caption{Fully restricted memory}\label{twomatrices}
\begin{minipage}{0.5\textwidth}
\begin{game}{5}{5}[][A: Memory Table]
   	    &  $P_1$                &  $P_2$ & $P_3$ & $P_4$ & $P_5$\\
   	 $P_1$  &    *     & 0 & 0 & 0 & 0\\
   	 $P_2$  &    0     & * & 0 & 0 & 0\\
   	$P_3$  &    0      & 0 & * & 0 & 0\\
   	$P_4$  &    0      & 0 & 0 & * & 0\\
   	    $P_5$  &    0  & 0 & 0 & 0 & *\\
\end{game}
\end{minipage}
\begin{minipage}{0.3\textwidth}

\begin{game}{5}{5}[][B: Strategy Table]
   	    &  $P_1$                &  $P_2$ & $P_3$ & $P_4$ & $P_5$\\
   	 $P_1$  &    *     & M & M & M & M\\
   	 $P_2$  &    M    & * & M & M & M\\
   	$P_3$  &   M      & M & * & M & M\\
   	$P_4$  &    M     & M & M & * & M\\
   	    $P_5$  &   M  & M & M & M & *\\
\end{game}
\end{minipage}
\end{figure}

\begin{figure}[h!tbp]
	\centering
	\caption{Social structure, $m <1$}
\begin{tikzpicture}[
            > = stealth, 
            shorten > = 1pt, 
            auto,
            node distance = 3cm, 
            semithick 
        ]

        \tikzstyle{every state}=[
            draw = black,
            thick,
            fill = white,
            minimum size = 4mm
        ]

        \node[state] (P1) {$P_1$};
        \node[state] (P2) [below left of=P1] {$P_2$};
       \node[state] (P3) [below of=P2] {$P_3$};

       \node[state] (P4) [below right of=P1] {$P_4$};
       \node[state] (P5) [below of=P4] {$P_5$};

        \path[-,dotted] (P1) edge node {} (P2);
        \path[-,dotted] (P1) edge node {} (P3);
        \path[-,dotted] (P1) edge node {} (P4);
        \path[-,dotted] (P1) edge node {} (P5); 
        \path[-,dotted] (P2) edge node {} (P3);
        \path[-,dotted] (P2) edge node {} (P4);
        \path[-,dotted] (P2) edge node {} (P5);
        \path[-,dotted] (P3) edge node {} (P4);
        \path[-,dotted] (P3) edge node {} (P5);
        \path[-,dotted] (P4) edge node {} (P5);
    \end{tikzpicture}
    \end{figure}

\subsection{Unlimited Memory: $m \geq N-1$}

The agents are able to memorize all other agents' identities if their memory size equals the population size minus 1 ($N-1$). In this case, with everyone's identity in memory, each agent can condition its strategy for every possible opponent, making $(Hawk,Dove)$ and $(Dove,Hawk)$, in addition to $(M,M)$, possible equilibria in meetings between any two agents. Thus, in the population, there are $2^{\frac{N(N-1)}{2}}$ different ways to form a social structure in which each pair of agents is playing $(H,D)$ or $(D,H)$. Under such a social structure, every pair of agents is arranged with a dominant-submissive relation. Hence, there is no fitness loss in the population from the $Hawk-Hawk$ clash. We refer to these fitness-loss-free social structures as ordered social structures.
    
With the presence of linear rank in their identities and the fact that agents memorize the identities of all other agents, a linear social hierarchical structure  (Figure 7) is the only ordered social structure that follows social convention. Figures 6 (A) and 6(B) show the corresponding memory table and strategy stable. Again, we use a population of five agents for illustration purposes. 

\begin{figure}[htbp]
	\centering
	\caption {Unlimited Memory ($m \geq N-1$), linear hierarchy}\label{twomatrices}
\begin{minipage}{0.5\textwidth}
\begin{game}{5}{5}[][A: Memory Table]
   	    &  $P_1$                &  $P_2$ & $P_3$ & $P_4$ & $P_5$\\
   	 $P_1$  &    *     & 1 & 1 & 1 & 1\\
   	 $P_2$  &    1     & * & 1 & 1 & 1\\
   	$P_3$  &    1      & 1 & * & 1 & 1\\
   	$P_4$  &    1      & 1 & 1 & * & 1\\
   	    $P_5$  &    1  & 1 & 1 & 1 & *\\
\end{game}
\end{minipage}
\begin{minipage}{0.3\textwidth}

\begin{game}{5}{5}[][B: Strategy Table]
   	    &  $P_1$                &  $P_2$ & $P_3$ & $P_4$ & $P_5$\\
   	 $P_1$  &    *     & H & H & H & H\\
   	 $P_2$  &    D    & * & H & H & H\\
   	$P_3$  &   D      & D & * & H & H\\
   	$P_4$  &    D     & D & D & * & H\\
   	    $P_5$  &   D  & D & D & D & *\\
\end{game}
\end{minipage}
\end{figure}

\begin{figure}[h!tbp]
	\centering
	\caption{Linear Hierarchy}
\begin{tikzpicture}[
            > = stealth, 
            shorten > = 1pt, 
            auto,
            node distance = 3cm, 
            semithick 
        ]

        \tikzstyle{every state}=[
            draw = black,
            thick,
            fill = white,
            minimum size = 4mm
        ]

        \node[state] (P1) {$P_1$};
        \node[state] (P2) [below left of=P1] {$P_2$};
       \node[state] (P3) [below of=P2] {$P_3$};

       \node[state] (P4) [below right of=P1] {$P_4$};
       \node[state] (P5) [below of=P4] {$P_5$};

        \path[->]  (P1) edge node {} (P2);
        \path[->]  (P1) edge node {} (P3);
        \path[->]  (P1) edge node {} (P4);
        \path[->]  (P1) edge node {} (P5); 
        \path[->]  (P2) edge node {} (P3);
        \path[->]  (P2) edge node {} (P4);
        \path[->]  (P2) edge node {} (P5);
        \path[->]  (P3) edge node {} (P4);
        \path[->]  (P3) edge node {} (P5);
        \path[->]  (P4) edge node {} (P5);
    \end{tikzpicture}
    \end{figure}

\subsection{Sufficient Memory: $\lceil N/2-1\rceil \leq m < N-1$}

We consider agents as having sufficient memory if they are able to memorize at least half minus one ($\lceil N/2-1\rceil$), but not all the identities of the other agents.\footnote{$\lceil x \rceil$ is the ceiling function, which gives the least integer greater than or equal to $x$.} We demonstrate that sufficient memory is sufficient for a population to form an ordered social structure. In particular, we show that a linear hierarchy can be formed.

\begin{proposition}
A linear hierarchy social structure can be an equilibrium if the memory size ($m$) satisfies $m \geq \lceil N/2-1\rceil$.
\end{proposition}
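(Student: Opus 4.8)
The plan is to exhibit one explicit strategy profile that realizes the linear hierarchy and to show that, whenever $m \geq \lceil N/2-1\rceil$, every agent has enough memory slots to carry out its prescribed behavior while having no profitable deviation. I label the agents $1,\dots,N$ by rank, with smaller numbers higher-ranked, so the target structure asks agent $i$ to play $H$ against every $j>i$ and $D$ against every $j<i$. The governing observation is that an agent never has to memorize everyone: it only needs to memorize those opponents against whom its action differs from its single universal action, since it can condition on memorized identities but must respond uniformly to the rest.

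For the construction I would have each agent memorize the cheaper of its two ``sides.'' Agent $i$ faces $i-1$ higher-ranked opponents (all needing $D$) and $N-i$ lower-ranked opponents (all needing $H$). If $i-1 \leq N-i$, I let agent $i$ memorize all higher-ranked agents, play $D$ against each of them, and adopt universal action $H$; otherwise agent $i$ memorizes all lower-ranked agents, plays $H$ against each, and adopts universal action $D$. In either case the realized action against every opponent agrees with the linear hierarchy, and the memory agent $i$ consumes is exactly $\min(i-1,\,N-i)$. Note the top and bottom agents need zero memory, using pure universal $H$ and pure universal $D$ respectively.

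Feasibility then reduces to bounding the worst case, i.e.\ showing $\max_{1 \leq i \leq N} \min(i-1,\,N-i) = \lceil N/2-1\rceil$. The quantity $\min(i-1,\,N-i)$ rises and then falls in $i$, peaking where the two arguments cross near $i=(N+1)/2$; a short split on the parity of $N$ (odd $N=2k+1$ giving peak value $k$, even $N=2k$ giving peak value $k-1$) yields exactly $\lceil N/2-1\rceil$. Consequently, as soon as $m \geq \lceil N/2-1\rceil$, the construction is simultaneously feasible for every agent, and since it uses the minimal number of slots it is also the configuration favored under the memory tie-breaker.

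It remains to check that the profile is an equilibrium that follows the social convention. Because an agent's total fitness is additive across the independent pairwise matches and each opponent's action is fixed by the profile, it suffices that every agent best-responds in each match. Every pair $(i,j)$ with $i<j$ realizes $(H,D)$, one of the pure Nash equilibria of the HD game, so each agent's action there is a best response (strictly so, using $C>V$). For the non-memorized opponents, the single universal action is a uniform best response precisely because those opponents act identically toward the agent—every higher-ranked opponent plays $H$, every lower-ranked opponent plays $D$—so one universal reply can simultaneously be optimal against all of them. Hence no change of memory or action raises any agent's payoff, and every memorized relation respects the convention by construction. The only genuinely non-routine step is the max--min count, and even that is a one-line parity argument; everything else is verification made clean by the separability of the game across pairwise matches.
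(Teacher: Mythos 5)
Your proof is correct and follows essentially the same approach as the paper: each agent memorizes only the smaller of its two sides (all higher-ranked or all lower-ranked opponents), uses one universal action for the rest, and the binding constraint is the middle agent's requirement of $\lceil N/2-1\rceil$ slots, verified by the same parity split. Your additional explicit check that the profile is a Nash equilibrium (pairwise best responses plus the uniform-behavior of non-memorized opponents) is a point the paper leaves implicit, but it does not change the substance of the argument.
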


\begin{proof}
Under a linear hierarchical social structure, each agent plays either $Hawk$ ($H$) or $Dove$ ($D$) to its opponents. To successfully implement these strategies, the agents can choose to memorize the fewer between all agents that they need to play $H$ to and all agents that they need to play $D$ to. Then, they play $H$ (or $D$) to the agents in their memory and $D$ (or $H$) to those who are not in their memory. The required memory level for the agents depends on their positions in the linear social rank. When $N$ is odd, the agent who needs the largest memory size is the one who is positioned in the middle, and it needs to memorize $(N-1)/2$ other agents to play $H$ (or $D$) to, and $D$ (or $H$) to the rest. If $N$ is even, the two agents in the middle need to memorize $N/2-1$ other agents. Hence, if everyone has a memory size no less than $\lceil N/2-1\rceil$, then the linear hierarchical social structure can be supported as an equilibrium.
\end{proof}



Note that if we treat the use of memory as a minor source of fitness loss, the optimal memory size that gives rise to the linear structure is $\lceil N/2-1\rceil$. Moreover, only the agent(s) situated in the middle of the linear social rank require(s) the use of its entire memory, while others can use less. Figure 8 (A) illustrates the optimal memory table for a population of five agents. We call this form of memory usage as a ``triangular memory structure'' because the usage of memory gradually increases as we move from either the top or the bottom toward the middle of the rank of agents, and the agents who need the largest memory size are those who are situated in the middle. The total memory usage is $\lceil N(N-2)/4 \rceil$ for a triangular memory structure in a population of $N$ agents. When compared with Figure 6 (A), one can observe that the memory cost is greatly reduced in Figure 8 (A), and it is sufficient to ensure that the strategy table in Figure 8 (B) (identical to Figure 6 (B)) constitutes an equilibrium that follows social convention. 

\begin{figure}[htbp]
	\centering
	\caption {Sufficient Memory, $\lceil N/2-1\rceil \leq m < N-1$}\label{twomatrices}
\begin{minipage}{0.5\textwidth}
\begin{game}{5}{5}[][A: Memory Table]
   	    &  $P_1$                &  $P_2$ & $P_3$ & $P_4$ & $P_5$\\
   	 $P_1$  &    *     & 0 & 0 & 0 & 0\\
   	 $P_2$  &    1     & * & 0 & 0 & 0\\
   	$P_3$  &    1      & 1 & * & 0 & 0\\
   	$P_4$  &    0      & 0 & 0 & * & 1\\
   	    $P_5$  &    0  & 0 & 0 & 0 & *\\
\end{game}
\end{minipage}
\begin{minipage}{0.3\textwidth}

\begin{game}{5}{5}[][B: Strategy Table]
   	    &  $P_1$                &  $P_2$ & $P_3$ & $P_4$ & $P_5$\\
   	 $P_1$  &    *     & H & H & H & H\\
   	 $P_2$  &    D    & * & H & H & H\\
   	$P_3$  &   D      & D & * & H & H\\
   	$P_4$  &    D     & D & D & * & H\\
   	    $P_5$  &   D  & D & D & D & *\\
\end{game}
\end{minipage}
\end{figure}

\subsection{Insufficient Memory: $1 \leq m < \lceil N/2-1\rceil $}

When the memory size is smaller than $\lceil N/2-1\rceil$, it is impossible for the population to form a linear hierarchical structure that follows social convention. Any equilibrium formed with agents' memory size smaller than $\lceil N/2-1\rceil$ will involve some pairs of agents playing $M$ to each other, causing a fitness loss from the $Hawk-Hawk$ clash. Therefore, the optimal structures are those that induce the fewest pairs of agents playing $(M,M)$ as their equilibrium strategies.

We first examine the case in which the agents can memorize at the most one other agent's identity, which we refer as the singular memory.
We use a computational method\footnote{See the appendix for a description of the computation method. The code can be found at https://github.com/harrisonhhy/optimal\_social\_structure.} to find all equilibrium social structures by examining all possible strategy tables that can be supported by at least one memory table (i.e., the agents must be able to perform the strategy they choose in the strategy table given the memory table). The results show that the despotic social system, where one agent dominates all others ($Hawk-Dove$ relation), is an equilibrium social structure that follows social convention and maximizes the total fitness in a population of five agents (Figures 9 and 10) and in a population of six agents as well.\footnote{Symmetrically, a social system in which one agent is dominated by all others is also an equilibrium social structure that follows social convention and maximizes the total fitness.} Note that in a population with five agents, there exists another equilibrium social structure, which we refer as the ``proxy despotism," where the top-ranked (in the linear social rank) agent only dominates (plays $H$ to) the second top-ranked agent and plays $M$ with the rest. The second top-ranked agent dominates (plays $H$ to) all agents besides the top-ranked agent but plays $D$ to the top.\footnote{Symmetrically, a social system in which the bottom ranked agent is dominated by the second bottom ranked and the second top ranked agent is dominated by all agents beside the bottomed ranked agent is also an equilibrium social structure that follows social convention and maximizes the total fitness.} It is equally as good as the despotic structure in terms of total fitness. As shown in Figure 11, proxy despotism has the same probability of having $Hawk-Hawk$ clash as the despotic social structure (4 out of 10 pairs do not have clash). However, their difference is that the proxy despotism requires full use of all agents' singular memory size, while under the despotic structure, the use of memory for the top-ranked agent can be waived. Consequently, the despotic social structure has a fitness advantage over the proxy despotic structure. This may explain their relative frequencies (\cite{Sasaki2016}). 

The intuition behind the optimality of the despotic social structure is that when the top-ranked agent is memorized by all others, each usage of memory creates a non-clash ($Hawk-Dove$) pairwise interaction between the top-ranked agent and the agent who uses the memory (in the example of a population with five agents, there are four memories used and four corresponding non-clash relations).



More generally, we conjecture that with insufficient memory  ($1 \leq m < \lceil N/2-1\rceil$), the optimal way for agents to use their memories in terms of social efficiency is described as follows.\footnote{We thank an anonymous reviewer for the suggestion.} There are $m$ out of $N$ agents, marked as group A, and other $N-m$ agents, marked as group B. Agents in group A form a clash-free linear hierarchy among themselves, and the most fitness-efficient way to do so is to use the triangular memory structure according to Proposition 1. Under the triangular memory structure, agents in group A will be separated into two halves, where the higher-ranked half (lower-ranked half) agents only memorize agents who have higher (lower) ranks than them, and then they play $D$ ($H$) to the agents in their memory and $H$ ($D$) to those who are not in their memory (including those agents in group B). The $N-m$ agents in group B will memorize (and only memorize) all agents in group A, play $D$ to the higher-ranked half in group A, play $H$ to the lower-ranked half in group A, and play $M$ to other agents in group B, whom they do not memorize, forming an anarchical/egalitarian sub-social structure within group B. With this form of memory structure and the corresponding strategies, all $m(N-m)$ pairs between group A and group B agents and $m(m-1)/2$ pairs among group A agents have $Hawk-Dove$ as their strategies in equilibrium. Hence, there are $(m(N-m)+m(m-1)/2)$ out of $N(N-1)/2$ clash-free pairs, and every one pairwise clash-free relation is maintained by at most one memory. Specifically, the $H-D$ relationship between a group A agent and a group B agent is maintained by the group B agent's memory of the group A agent. The $H-D$ relation between a pair of agents within the higher-ranked (lower-ranked) half in group A is maintained by the lower-ranked (higher-ranked) agent's memory of the higher-ranked (lower-ranked) agent. The $H-D$ relation between one agent from the higher-ranked half and one agent from the lower-ranked half in group A requires no memory to sustain. We call this social structure the ``three-layer dominance hierarchy'' because society is separated into three classes. When $m$ is even, the upper class contains the top $m/2$ ranked agents with a linear hierarchy, the middle class contains the middle $N-m$ ranked agents with an egalitarian social structure, and the lower class contains the bottom $m/2$ ranked agents with a linear hierarchy. When $m$ is odd, the upper class contains the top $\lceil m/2 \rceil$ (or $\lceil m/2 \rceil-1$) ranked agents, and the lower class contains the bottom $\lceil m/2 \rceil-1$ (or $\lceil m/2 \rceil$) ranked agents. The total amount of memory required for the three-layer dominance hierarchy is $m(N-m) + \lceil m(m-2)/4 \rceil$. Note that the despotic social system is consistent with the description of the three-layer dominance hierarchy for the special case of $m=1$. Figure 13 provides a general illustration of the ``three-layer dominance hierarchy"


We confirm our conjecture in a population of seven agents with $m=2$, using our computation method. The social structure at equilibrium that induces the least fitness loss is that the top-ranked agent dominates all others, the bottom-ranked agent is dominated by all others, and the remaining five agents form an egalitarian social structure among themselves. This is supported by a memory structure where the five middle-ranked agents memorize the top-ranked and the bottom-ranked agents, and the top- and bottom-ranked agents do not memorize any agent (Figures 14 and 15).

\begin{figure}[htbp]
	\centering
	\caption {Singular Memory, Despotic Structure $m=1$}\label{twomatrices}
\begin{minipage}{0.5\textwidth}
\begin{game}{5}{5}[][A: Memory Table]
   	    &  $P_1$                &  $P_2$ & $P_3$ & $P_4$ & $P_5$\\
   	 $P_1$  &    *     & 0 & 0 & 0 & 0\\
   	 $P_2$  &    1     & * & 0 & 0 & 0\\
   	$P_3$  &    1      & 0 & * & 0 & 0\\
   	$P_4$  &    1      & 0 & 0 & * & 0\\
   	    $P_5$  &    1  & 0 & 0 & 0 & *\\
\end{game}
\end{minipage}
\begin{minipage}{0.3\textwidth}

\begin{game}{5}{5}[][B: Strategy Table]
   	    &  $P_1$                &  $P_2$ & $P_3$ & $P_4$ & $P_5$\\
   	 $P_1$  &    *     & H & H & H & H\\
   	 $P_2$  &    D    & * & M & M & M\\
   	$P_3$  &   D      & M & * & M & M\\
   	$P_4$  &    D     & M & M & * & M\\
   	    $P_5$  &   D  & M & M & M & *\\
\end{game}
\end{minipage}
\end{figure}


\begin{figure}[h!tbp]
	\centering
	\caption{Despotic social structure, $m = 1$}
\begin{tikzpicture}[
            > = stealth, 
            shorten > = 1pt, 
            auto,
            node distance = 3cm, 
            semithick 
        ]

        \tikzstyle{every state}=[
            draw = black,
            thick,
            fill = white,
            minimum size = 4mm
        ]

        \node[state] (P1) {$P_1$};
        \node[state] (P2) [below left of=P1] {$P_2$};
       \node[state] (P3) [below of=P2] {$P_3$};

       \node[state] (P4) [below right of=P1] {$P_4$};
       \node[state] (P5) [below of=P4] {$P_5$};

        \path[->] (P1) edge node {} (P2);
        \path[->] (P1) edge node {} (P3);
        \path[->] (P1) edge node {} (P4);
        \path[->] (P1) edge node {} (P5); 
        \path[-,dotted] (P2) edge node {} (P3);
        \path[-,dotted] (P2) edge node {} (P4);
        \path[-,dotted] (P2) edge node {} (P5);
        \path[-,dotted] (P3) edge node {} (P4);
        \path[-,dotted] (P3) edge node {} (P5);
        \path[-,dotted] (P4) edge node {} (P5);
    \end{tikzpicture}
    \end{figure}
    
    
\begin{figure}[htbp]
	\centering
	\caption {Singular Memory, Proxy Despotism $m=1$}\label{twomatrices}
\begin{minipage}{0.5\textwidth}
\begin{game}{5}{5}[][A: Memory Table]
   	    &  $P_1$                &  $P_2$ & $P_3$ & $P_4$ & $P_5$\\
   	 $P_1$  &    *     & 1 & 0 & 0 & 0\\
   	 $P_2$  &    1     & * & 0 & 0 & 0\\
   	$P_3$  &    0      & 1 & * & 0 & 0\\
   	$P_4$  &    0      & 1 & 0 & * & 0\\
   	    $P_5$  &    0  & 1 & 0 & 0 & *\\
\end{game}
\end{minipage}
\begin{minipage}{0.3\textwidth}

\begin{game}{5}{5}[][B: Strategy Table]
   	    &  $P_1$                &  $P_2$ & $P_3$ & $P_4$ & $P_5$\\
   	 $P_1$  &    *     & H & M & M & M\\
   	 $P_2$  &    D    & * & H & H & H\\
   	$P_3$  &   M      & D & * & M & M\\
   	$P_4$  &    M     & D & M & * & M\\
   	    $P_5$  &   M  & D & M & M & *\\
\end{game}
\end{minipage}
\end{figure}


\begin{figure}[h!tbp]
	\centering
	\caption{Proxy Despotism $m = 1$}
\begin{tikzpicture}[
            > = stealth, 
            shorten > = 1pt, 
            auto,
            node distance = 3cm, 
            semithick 
        ]

        \tikzstyle{every state}=[
            draw = black,
            thick,
            fill = white,
            minimum size = 4mm
        ]

        \node[state] (P1) {$P_1$};
        \node[state] (P2) [below left of=P1] {$P_2$};
       \node[state] (P3) [below of=P2] {$P_3$};

       \node[state] (P4) [below right of=P1] {$P_4$};
       \node[state] (P5) [below of=P4] {$P_5$};

        \path[->] (P1) edge node {} (P2);
        \path[-,dotted] (P1) edge node {} (P3);
        \path[-,dotted] (P1) edge node {} (P4);
        \path[-,dotted] (P1) edge node {} (P5); 
        \path[->] (P2) edge node {} (P3);
        \path[->] (P2) edge node {} (P4);
        \path[->] (P2) edge node {} (P5);
        \path[-,dotted] (P3) edge node {} (P4);
        \path[-,dotted] (P3) edge node {} (P5);
        \path[-,dotted] (P4) edge node {} (P5);
    \end{tikzpicture}
    \end{figure}
    
    \begin{figure} [htbp]
\centering
\caption {Three-layer dominance hierarchy: a general illustration}
\includegraphics[scale=0.4]{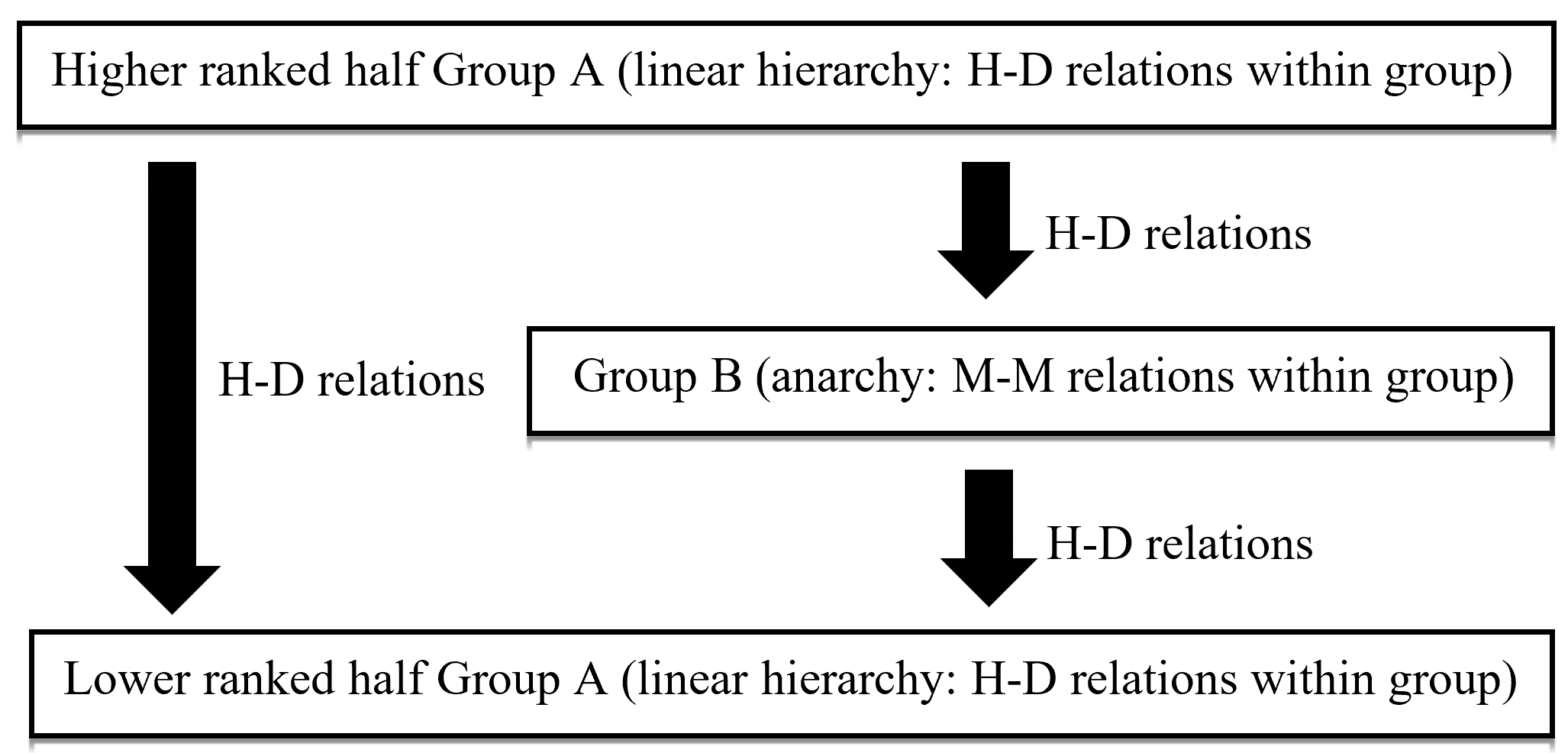}
\end{figure}
    
\begin{figure}[htbp]
	\centering
	\caption {Insufficient Memory, $N=7, m=2$}\label{twomatrices}
\begin{minipage}{0.5\textwidth}
\begin{game}{7}{7}[][A: Memory Table]
   	    &  $P_1$                &  $P_2$ & $P_3$ & $P_4$ & $P_5$ & $P_6$ & $P_7$\\
   	 $P_1$  &    *     & 0 & 0 & 0 & 0 & 0 & 0\\
   	 $P_2$  &    1     & * & 0 & 0 & 0 & 0 & 1\\
   	$P_3$  &    1      & 0 & * & 0 & 0 & 0 & 1\\
   	$P_4$  &    1      & 0 & 0 & * & 0 & 0 & 1\\
   	$P_5$  &    1  & 0 & 0 & 0 & * & 0 & 1\\
    $P_6$  &    1  & 0 & 0 & 0 & 0 & * & 1\\
   	$P_7$  &    0  & 0 & 0 & 0 & 0 & 0 & *\\
\end{game}
\end{minipage}
\begin{minipage}{0.45\textwidth}

\begin{game}{7}{7}[][N: Strategy Table]
   	    &  $P_1$                &  $P_2$ & $P_3$ & $P_4$ & $P_5$ & $P_6$ & $P_7$\\
   	 $P_1$  &    *     & H & H & H & H & H & H\\
   	 $P_2$  &    D     & * & M & M & M & M & H\\
   	$P_3$  &    D      & M & * & M & M & M & H\\
   	$P_4$  &    D      & M & M & * & M & M & H\\
   	$P_5$  &    D  & M & M & M & * & M & H\\
    $P_6$  &    D  & M & M & M & M & * & H\\
   	$P_7$  &    D  & D & D & D & D & D & *\\
\end{game}
\end{minipage}
\end{figure}

\begin{figure}[h!tbp]
	\centering
	\caption{Three-layer dominance hierarchy; $m = 2$}
\begin{tikzpicture}[
            > = stealth, 
            shorten > = 1pt, 
            auto,
            node distance = 3cm, 
            semithick 
        ]

        \tikzstyle{every state}=[
            draw = black,
            thick,
            fill = white,
            minimum size = 4mm
        ]

        \node[state] (P2) {$P_2$};
        \node[state] (P1) [below left of=P2] {$P_1$};
       \node[state] (P4) [below right of=P1] {$P_4$};
                  \node[state] (P5) [right of=P4] {$P_5$};

              \node[state] (P6) [right of=P5] {$P_6$};
                     \node[state] (P7) [above right of=P6] {$P_7$};
                                          \node[state] (P3) [above left of=P7] {$P_3$};

        \path[->] (P1) edge node {} (P2);
                \path[->] (P1) edge node {} (P3);
                        \path[->] (P1) edge node {} (P4);
                                \path[->] (P1) edge node {} (P5);
                                        \path[->] (P1) edge node {} (P6);
                                                \path[->] (P1) edge node {} (P7);
                                                    
        \path[->] (P2) edge node {} (P7);
                        \path[->] (P3) edge node {} (P7);
                                \path[->] (P4) edge node {} (P7);
                                        \path[->] (P5) edge node {} (P7);
                                                \path[->] (P6) edge node {} (P7);
        \path[-,dotted] (P2) edge node {} (P3);
        \path[-,dotted] (P2) edge node {} (P4);
                \path[-,dotted] (P2) edge node {} (P5);
        \path[-,dotted] (P2) edge node {} (P6);
        \path[-,dotted] (P3) edge node {} (P4);
        \path[-,dotted] (P3) edge node {} (P5);
                \path[-,dotted] (P3) edge node {} (P6);
        \path[-,dotted] (P4) edge node {} (P5);
        \path[-,dotted] (P4) edge node {} (P6);
               \path[-,dotted] (P5) edge node {} (P6);
    \end{tikzpicture}
    \end{figure}



\newpage

\section{Extension} \label{sec:extension}

Thus far, we have analyzed the optimal social hierarchical structures under various levels of cognitive ability limitation, where there exists a predetermined linear social rank and an associated social convention with regard to the identities of the agents. One might wonder what can change in those equilibria if the rank or the social convention does not exist -- in which case the agents' identities are identical \textit{ex ante}, although they are still unique tags that others can recognize and condition their strategies on. 

Treating identities as arbitrary name tags instead of numbers with rank enables agents to form social hierarchical structures that contain cyclic dominance relations (loop) where A dominates B, B dominates C, and C dominates A. This type of intransitive dominance relation is ruled out if everyone follows social convention, which is based on transitive linear rank. A cyclic loop can sometimes help a population achieve a higher total fitness level with fewer memory usages, especially when the population size is small. Figure 16 shows an example of a population of five agents forming a social structure containing a loop (among $P_3$, $P_4$, and $P_5$). Its individual memory usage is at the same level as that of the despotic social structure (Figure 9), but it achieves a clash-free status as the linear hierarchy does. See Figure 17. Apparently, this structure with a loop outraces all structures introduced in the previous section.

Various studies (\cite{Banks1956}; \cite{Chase1982}; \cite{Wang2011}) have pointed out that non-transitive dominance relations are very rare in nature compared to transitive linear dominance structures. This suggests that there are deeper reasons for species not to form non-transitive dominance relations. Besides directly ruling out a structure with loops by assuming that agents just follow the predetermined social convention (as in equilibrium, they do not have reasons not to follow), we do not yet have a feature in our model to explain why non-transitive structures are not favored by agents despite their potential to achieve higher total fitness.  

\begin{figure}[htbp]
	\centering
	\caption {Singular Memory, Loop, $m=1$}\label{twomatrices}
\begin{minipage}{0.5\textwidth}
\begin{game}{5}{5}[][A: Memory Table]
   	    &  $P_1$                &  $P_2$ & $P_3$ & $P_4$ & $P_5$\\
   	 $P_1$  &    *     & 0 & 0 & 0 & 0\\
   	 $P_2$  &    1     & * & 0 & 0 & 0\\
   	$P_3$  &    0      & 0 & * & 1 & 0\\
   	$P_4$  &    0      & 0 & 0 & * & 1\\
   	    $P_5$  &    0  & 0 & 1 & 0 & *\\
\end{game}
\end{minipage}
\begin{minipage}{0.3\textwidth}

\begin{game}{5}{5}[][B: Strategy Table]
   	    &  $P_1$                &  $P_2$ & $P_3$ & $P_4$ & $P_5$\\
   	 $P_1$  &    *     & H & H & H & H\\
   	 $P_2$  &    D    & * & H & H & H\\
   	$P_3$  &   D      & D & * & H & D\\
   	$P_4$  &    D     & D & D & * & H\\
   	    $P_5$  &   D  & D & H & D & *\\
\end{game}
\end{minipage}
\end{figure}

\begin{figure}[h!tbp]
	\caption{Loop Structure}
	\centering
\begin{tikzpicture}[
            > = stealth, 
            shorten > = 1pt, 
            auto,
            node distance = 3cm, 
            semithick 
        ]

        \tikzstyle{every state}=[
            draw = black,
            thick,
            fill = white,
            minimum size = 4mm
        ]

        \node[state] (P1) {$P_1$};
        \node[state] (P2) [below left of=P1] {$P_2$};
       \node[state] (P3) [below of=P2] {$P_3$};

       \node[state] (P4) [below right of=P1] {$P_4$};
       \node[state] (P5) [below of=P4] {$P_5$};

        \path[->] (P1) edge node {} (P2);
        \path[->] (P1) edge node {} (P3);
        \path[->] (P1) edge node {} (P4);
        \path[->] (P1) edge node {} (P5); 
        \path[->] (P2) edge node {} (P3);
        \path[->] (P2) edge node {} (P4);
        \path[->] (P2) edge node {} (P5);
        \path[->] (P3) edge node {} (P4);
        \path[->] (P5) edge node {} (P3);
        \path[->] (P4) edge node {} (P5);
    \end{tikzpicture}
    \end{figure}

\section{Conclusion} \label{sec:conclusion}

We propose a model in which a population of agents is matched to play a Hawk-Dove game. Agents are equipped with unique identities, and there exists a linear social rank over their identities, which is accompanied by a predetermined social convention. Our model suggests that at different levels of cognitive ability limitations, different hierarchical social structures can be supported as equilibria that follow social convention, and they are optimal in terms of the total fitness of the population. Our findings can be supported by the fact that these hierarchies are the most common ones observed in nature. 
Our model suggests a way to understand how different species utilize their cognitive abilities in social interactions by examining the existing hierarchical social structures in their populations.    

\singlespacing
\bibliography{references}

\begin{thebibliography}{31}
\providecommand{\natexlab}[1]{#1}
\providecommand{\url}[1]{\texttt{#1}}
\expandafter\ifx\csname urlstyle\endcsname\relax
  \providecommand{\doi}[1]{doi: #1}\else
  \providecommand{\doi}{doi: \begingroup \urlstyle{rm}\Url}\fi

\bibitem[Addison and Simmel(1980)]{Addison1980}
W.~E. Addison and E.~C. Simmel.
\newblock {The relationship between dominance and leadership in a flock of
  ewes}.
\newblock \emph{Bulletin of the Psychonomic Society}, 1980.
\newblock ISSN 00905054.
\newblock \doi{10.3758/BF03334540}.

\bibitem[Alcock(2013)]{Alcock2013}
J.~Alcock.
\newblock \emph{{Animal behavior: An evolutionary approach}}.
\newblock 2013.

\bibitem[Appleby(1983)]{Appleby1983}
M.~C. Appleby.
\newblock {The probability of linearity in hierarchies}.
\newblock \emph{Animal Behaviour}, 1983.
\newblock ISSN 00033472.
\newblock \doi{10.1016/S0003-3472(83)80084-0}.

\bibitem[Banks(1956)]{Banks1956}
E.~M. Banks.
\newblock {Social Organization in Red Jungle Fowl Hens (Gallus Gallus Subsp.)}.
\newblock \emph{Ecology}, 1956.
\newblock ISSN 0012-9658.
\newblock \doi{10.2307/1933136}.

\bibitem[Barkan et~al.(1986)Barkan, Craig, Strahl, Stewart, and
  Brown]{Barkan1986}
C.~P. Barkan, J.~L. Craig, S.~D. Strahl, A.~M. Stewart, and J.~L. Brown.
\newblock {Social dominance in communal Mexican jays Aphelocoma ultramarina}.
\newblock \emph{Animal Behaviour}, 1986.
\newblock ISSN 00033472.
\newblock \doi{10.1016/0003-3472(86)90021-7}.

\bibitem[Binmore(1994)]{Binmore1994}
K.~G. Binmore.
\newblock {Game Theory and the Social Contract: Playing Fair}.
\newblock \emph{English}, 1994.

\bibitem[Chase(1982)]{Chase1982}
I.~D. Chase.
\newblock {Dynamics of Hierarchy Formation: The Sequential Development of
  Dominance Relationships}.
\newblock \emph{Behaviour}, 1982.
\newblock ISSN 1568539X.
\newblock \doi{10.1163/156853982X00364}.

\bibitem[Chase and Seitz(2011)]{Chase2011}
I.~D. Chase and K.~Seitz.
\newblock {Self-structuring properties of dominance hierarchies. A new
  perspective}.
\newblock In \emph{Advances in Genetics}. 2011.
\newblock \doi{10.1016/B978-0-12-380858-5.00001-0}.

\bibitem[Chase et~al.(2002)Chase, Tovey, Spangler-Martin, and
  Manfredonia]{Chase2002}
I.~D. Chase, C.~Tovey, D.~Spangler-Martin, and M.~Manfredonia.
\newblock {Individual differences versus social dynamics in the formation of
  animal dominance hierarchies}.
\newblock \emph{Proceedings of the National Academy of Sciences of the United
  States of America}, 2002.
\newblock ISSN 00278424.
\newblock \doi{10.1073/pnas.082104199}.

\bibitem[de~Vries(1995)]{DeVries1995}
H.~de~Vries.
\newblock {An improved test of linearity in dominance hierarchies containing
  unknown or tied relationships}.
\newblock \emph{Animal Behaviour}, 1995.
\newblock ISSN 00033472.
\newblock \doi{10.1016/0003-3472(95)80053-0}.

\bibitem[Doi and Nakamaru(2018)]{DoiNakamaru2018}
K.~Doi and M.~Nakamaru.
\newblock The coevolution of transitive inference and memory capacity in the
  hawk–dove game.
\newblock \emph{Journal of Theoretical Biology}, pages 91--107, 2018.

\bibitem[Drews(1993)]{Drews1993}
C.~Drews.
\newblock {The concept and definition of dominance in animal behaviour}.
\newblock \emph{Behaviour}, 1993.
\newblock ISSN 00057959.
\newblock \doi{10.1163/156853993X00290}.

\bibitem[Dugatkin and Earley(2004)]{Dugatkin2004}
L.~A. Dugatkin and R.~L. Earley.
\newblock {Individual recognition, dominance hierarchies and winner and loser
  effects}.
\newblock \emph{Proceedings of the Royal Society B: Biological Sciences}, 2004.
\newblock ISSN 14712970.
\newblock \doi{10.1098/rspb.2004.2777}.

\bibitem[Favati et~al.(2017)Favati, L{\o}vlie, and Leimar]{Favati2017a}
A.~Favati, H.~L{\o}vlie, and O.~Leimar.
\newblock {Individual aggression, but not winner-loser effects, predicts social
  rank in male domestic fowl}.
\newblock \emph{Behavioral Ecology}, 2017.
\newblock ISSN 14657279.
\newblock \doi{10.1093/beheco/arx053}.

\bibitem[Goessmann et~al.(2000)Goessmann, Hemelrijk, and Huber]{Goessmann2000}
C.~Goessmann, C.~Hemelrijk, and R.~Huber.
\newblock {The formation and maintenance of crayfish hierarchies: Behavioral
  and self-structuring properties}.
\newblock \emph{Behavioral Ecology and Sociobiology}, 2000.
\newblock ISSN 03405443.
\newblock \doi{10.1007/s002650000222}.

\bibitem[Halpern and Pass(2015)]{Halpern2015}
J.~Y. Halpern and R.~Pass.
\newblock {Algorithmic rationality: Game theory with costly computation}.
\newblock \emph{Journal of Economic Theory}, 2015.
\newblock ISSN 10957235.
\newblock \doi{10.1016/j.jet.2014.04.007}.

\bibitem[Hausfater et~al.(1982)Hausfater, Altmann, and Altmann]{Hausfater1982}
G.~Hausfater, J.~Altmann, and S.~Altmann.
\newblock {Long-term consistency of dominance relations among female baboons
  (Papio cynocephalus)}.
\newblock \emph{Science}, 1982.
\newblock ISSN 00368075.
\newblock \doi{10.1126/science.217.4561.752}.

\bibitem[Heinze(1990)]{Heinze1990}
J.~Heinze.
\newblock {Dominance behavior among ant females}.
\newblock \emph{Naturwissenschaften}, 1990.
\newblock ISSN 14321904.
\newblock \doi{10.1007/BF01131799}.

\bibitem[Holekamp and Smale(1993)]{Holekamp1993}
K.~E. Holekamp and L.~Smale.
\newblock {Ontogeny of dominance in free-living spotted hyaenas: Juvenile rank
  relations with other immature individuals}.
\newblock \emph{Animal Behaviour}, 1993.
\newblock ISSN 00033472.
\newblock \doi{10.1006/anbe.1993.1214}.

\bibitem[Kummer(1984)]{Kummer1984}
H.~Kummer.
\newblock {From laboratory to desert and back: A social system of hamadryas
  baboons}.
\newblock \emph{Animal Behaviour}, 1984.
\newblock ISSN 00033472.
\newblock \doi{10.1016/S0003-3472(84)80208-0}.

\bibitem[Kura et~al.(2016)Kura, Broom, and Kandler]{Kura2016}
K.~Kura, M.~Broom, and A.~Kandler.
\newblock {A Game-Theoretical Winner and Loser Model of Dominance Hierarchy
  Formation}.
\newblock \emph{Bulletin of Mathematical Biology}, 2016.
\newblock ISSN 15229602.
\newblock \doi{10.1007/s11538-016-0186-9}.

\bibitem[Nakamaru and Sasaki(2003)]{NakamaruSasaki2003}
M.~Nakamaru and A.~Sasaki.
\newblock Can transitive inference evolve in animals playing the hawk–dove
  game?
\newblock \emph{Journal of Theoretical Biology}, pages 461--470, 2003.

\bibitem[Nelissen(1985)]{Nelissen1985}
M.~H. Nelissen.
\newblock {Structure Of The Dominance Hierarchy and Dominance Determining
  “Group Factors” in Melanochromis Auratus (Pisces, Cichlidae)}.
\newblock \emph{Behaviour}, 1985.
\newblock ISSN 1568539X.
\newblock \doi{10.1163/156853985X00280}.

\bibitem[Sasaki et~al.(2016)Sasaki, Penick, Shaffer, Haight, Pratt, and
  Liebig]{Sasaki2016}
T.~Sasaki, C.~A. Penick, Z.~Shaffer, K.~L. Haight, S.~C. Pratt, and J.~Liebig.
\newblock {A simple behavioral model predicts the emergence of complex animal
  hierarchies}.
\newblock \emph{American Naturalist}, 2016.
\newblock ISSN 00030147.
\newblock \doi{10.1086/686259}.

\bibitem[Savin-Williams(1980)]{Savin-Williams1980}
R.~C. Savin-Williams.
\newblock {Dominance hierarchies in groups of middle to late adolescent males}.
\newblock \emph{Journal of Youth and Adolescence}, 1980.
\newblock ISSN 00472891.
\newblock \doi{10.1007/BF02088381}.

\bibitem[Schjelderup-Ebbe(1935)]{Schjelderup-Ebbe1935}
T.~Schjelderup-Ebbe.
\newblock {Social behavior of birds.}
\newblock In \emph{A Handbook of Social Psychology}. 1935.

\bibitem[Smith and Parker(1976)]{Smith1976}
J.~M. Smith and G.~A. Parker.
\newblock {The logic of asymmetric contests}.
\newblock \emph{Animal Behaviour}, 1976.
\newblock ISSN 00033472.
\newblock \doi{10.1016/S0003-3472(76)80110-8}.

\bibitem[Smith and Price(1973)]{Smith1973}
J.~M. Smith and G.~R. Price.
\newblock {The logic of animal conflict}.
\newblock \emph{Nature}, 1973.
\newblock ISSN 00280836.
\newblock \doi{10.1038/246015a0}.

\bibitem[Surbeck et~al.(2011)Surbeck, Mundry, and Hohmann]{Surbeck2011}
M.~Surbeck, R.~Mundry, and G.~Hohmann.
\newblock {Mothers matter! Maternal support, dominance status and mating
  success in male bonobos (Pan paniscus)}.
\newblock In \emph{Proceedings of the Royal Society B: Biological Sciences},
  2011.
\newblock \doi{10.1098/rspb.2010.1572}.

\bibitem[Vannini and Sardini(1971)]{Vannini1971}
M.~Vannini and A.~Sardini.
\newblock {Aggressivity and dominance in river crab potamon fluviatile
  (herbst)}.
\newblock \emph{Monitore Zoologico Italiano - Italian Journal of Zoology},
  1971.
\newblock ISSN 00269786.
\newblock \doi{10.1080/00269786.1971.10736174}.

\bibitem[Wang et~al.(2011)Wang, Zhu, Zhu, Zhang, Lin, and Hu]{Wang2011}
F.~Wang, J.~Zhu, H.~Zhu, Q.~Zhang, Z.~Lin, and H.~Hu.
\newblock {Bidirectional control of social hierarchy by synaptic efficacy in
  medial prefrontal cortex}.
\newblock \emph{Science}, 2011.
\newblock ISSN 10959203.
\newblock \doi{10.1126/science.1209951}.

\end{thebibliography}

\newpage
\section*{Appendix: Computational method to find the optimal social structure} \label{sec:appendixa}
\begin{enumerate}
    \item Set N = given population, m = given memory ability.
    \item Create the memory profile matrix base and the strategy profile matrix base, they are N-by-N null matrices.
    \item Digitize the strategies (we use: hawk = 3, dove = 1, mix = 2), to be used in the strategy profile matrix.
    \item Generate an arbitrary memory profile matrix under the given memory ability.
    \item Generate an arbitrary strategy profile matrix at equilibrium (the off-diagonal pairs must sum to 4, meaning players must play $H-D$, $D-H$, or $M-M$ at equilibrium).
    \item Check whether the strategy profile can be supported by the memory profile. The strategy profile can be supported by the memory profile if the agents play the same strategy against all non-memorized opponents:
    \begin{enumerate}
        \item Locate all 0-value entries in the first row in the memory profile matrix. This reflects all non-memorized opponents for the first agent.
        \item Find all corresponding entries in the strategy profile matrix to the entries found in (a).
        \item If all entries found in (b) have the same value, then this agent's strategy profile can be supported by its memory profile.
        \item Repeat (a) to (c) for every agents. The strategy profile matrix is supported by the memory profile matrix if all agents' strategy profiles can be supported by their memory profiles.
          \end{enumerate}
    \item If the strategy profile can be supported by the memory profile, then it is an equilibrium. Count the number of strategy $M$ by counting entries with value=2 in the strategy profile matrix. Divide by 2 will give us the number of pairs that play $M-M$.
    \item Repeat 5-7 for all possible strategy profile matrices at equilibrium.
    \item Repeat 4-8 for all possible memory profile matrices under the given memory ability.
    \item The social structure that induces the least number of $M-M$ pairs is the optimal social structure. If there are multiple, compare their memory profile matrices. The one with more zeroes (fewer ones) is the superior one in terms of fitness, because of lower memory usage.
    \end{enumerate}

\end{document}